\documentclass[runningheads,a4paper]{llncs}
\usepackage{amssymb,amsmath}
\usepackage{graphicx}
\begin{document}

\title{Multi-layered graph-based multi-document summarization model}
\title{Multi-layered graph-based multi-document summarization model}
\author{Ercan Canhasi}         
\institute{ University of Prizren\\ Faculty of Computer Science\\  Rruga e Shkronjave” nr.1 20000 Prizren, Republika e Kosovës}
\date{May 21, 2014}    
\toctitle{Lecture Notes in Computer Science}
\tocauthor{Authors' Instructions}
\maketitle

\begin{abstract}

Multi-document summarization is a process of automatic generation of a compressed version of the given collection of documents. Recently, the graph-based models and ranking algorithms have been actively investigated by the extractive document summarization community. While most work to date focuses on homogeneous connecteness of sentences and heterogeneous connecteness of documents and sentences (e.g. sentence similarity weighted by document importance), in this paper we present a novel 3-layered graph model that emphasizes not only sentence and document level relations but also the influence of under sentence level relations (e.g. a part of sentence similarity).

\keywords{text-mining,
multidocument summarization,
graph-based summarization,
graph-based ranking algorithm,
PageRank}
\end{abstract}

\section{Introduction}
Multi-document summarization (MDS) aims to filter the most important information from a set of documents to generate a compressed summary. Recently, the graph-based models and ranking algorithms have been extensively researched. \\ \indent While most work to date focuses on the sentence and the document level relations, in this work we present a novel 3-layered graph model that emphasizes not only the sentence and the document level relations but also the influence of the under sentence level relations. The document set D={$d_1$,$d_2$,\ldots$d_n$} is represented as a weighted undirected frame graph G. As a difference to previous works in our graphs there is not only one kind of objects (i.e. sentences), but there are three kinds of objects: semantic role frames, sentences and documents. \\ \indent Even if humans do not always agree on the content to be added to a summary, they perform very well on this task. Therefore our goal should be to find a way of mimicking the cognition behind the human like summarization process. For this challenge we consider using the psychology cognitive situation model, namely the Event-Indexing model \cite{Zwaan1}. According to this model a human-like system should keep track of five indices while reading the document. Those indices are \emph{protagonist}, \emph{temporality}, \emph{spatiality}, \emph{causality} and \emph{intention}, with the given descending order of importance. One can also show that the semantic role parser's \cite{Carreras} output can be mapped to the above proposed cognitive model. Semantic roles are defined as the relationships between syntactic constituents and the predicates. Most sentence components have semantic connections with the predicate, carrying answers to the questions such as who, what, when, where etc. \\
\indent The summarization method, we propose, works in the following way. First, the documents are given to the SRL parser where the semantic arguments from each parsed sentence are extracted. Based on the event-indexing model we calculate the composite similarity between all semantic frames. Then we generate a semantic graph where nodes are semantic frames and edges are the composite similarity values. By using an intelligent weighting scheme we add two more layers, namely the sentence and the document layers, which yields the richer multilayered graph model with the inter and intra sentence and the documental level relations. Next we use modified version of PageRank for identifying the significant edges in the graph. The next step aims to further remove redundant information in the summary by penalizing the sentences largely overlapping with other high ranked sentences. Based on the text graph and the obtained rank scores, a greedy algorithm is applied to inflict the diversity penalty and compute the final rank scores of the sentences. Later, we sum the PageRank scores of semantic frames, originating from the same sentence, and we use it as a score for sentence scoring. Subsequently, the top scoring sentences are selected one-by-one and put into the summary.\\ \indent The remainder of this paper is organized as follows. Section 2 reviews existing graph-based summarization models. Section 3 and 4 introduces the proposed sentence ranking algorithm. After that, Section 5 reports experiments and evaluation results. Finally, Section 6 concludes the paper.

\section{Related work}
The graph-based models have been developed by the extractive document summarization community in the past years \cite{erk01,rada1}. Conventionally, they model a document or a set of documents as a text graph composed by taking a text unit as a node and similarity between text units as edges. The significance of a node in a graph is estimated by graph-based ranking algorithms, such as PageRank \cite{page} or HITS \cite{hits}. Sentences in document(s) are ranked based on the computed node significance and the most salient ones are selected to form an extractive summary. An algorithm called LexRank \cite{erk01}, adapted from PageRank, was applied to calculate sentence significance, which was then used as the criterion to rank and select summary sentences. Meanwhile, Mihalcea and Tarau \cite{rada1} presented their PageRank variation, called TextRank, in the same year.

\section{Multilayered graph model}

In this section we present our novel graph model, which will be used in frame ranking algorithm presented in the next section. Let a set of documents D be a text similarity graph $G=(V_f, V_s, V_d, E^{V_f}, E^{V_s}, E^{V_d}, \alpha_v, \beta_v, \gamma_v, \alpha_e, \beta_e, \gamma_e)$, where $V_f, V_s$  and $V_d$ represent the frame, sentence and document vertex set, respectively. $E^{V_f}\subseteq V_f \times V_f , E^{V_s}\subseteq V_s \times V_s$ and $E^{V_d}\subseteq V_d \times V_d$ are frame, sentence and document edge set. $\alpha_v:V_f\rightarrow\Re_+ , \beta_v:V_s\rightarrow\Re_+$ and $\gamma_v:V_d\rightarrow\Re_+$ are three functions defined to label frame, sentence and document vertices, while $\alpha_e:E^{V_f} \rightarrow \Re_+, \beta_e : E^{V_s} \rightarrow \Re_+$ and $\gamma_e: E^{V_d} \rightarrow \Re_+$ are functions for labeling frame, sentence and document edges.
\\
\indent Adding new layers to the classical sentence similarity graph yields the richer multilayered graph model with the inter and intra sentence and the documental level relations that can be then used to enhance the existing PageRank algorithm. Figure~\ref{MDS System}. shows the conventional text graph model before and after applying the concepts of multilayered graph representation. One can easily show that the new graph model brings to light the following previously ignored information: sentence to sentence similarity can now be distinguished in two groups, one within a document and one across two documents; the document's significance can influence the sentence ratings; there is also a completely new kind of objects (i.e. semantic role labeler (SRL) frames) involved in representing the inner sentence relations.
\begin{figure}
\centering
\includegraphics[width=0.7\textwidth]{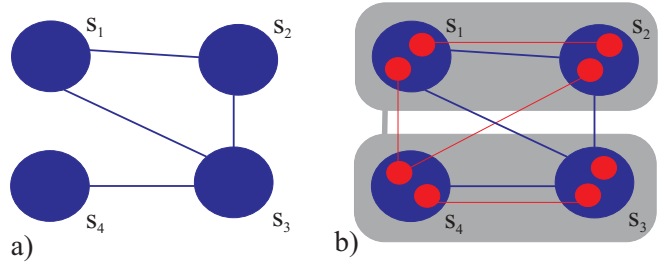}
\caption{Summarization graph model (a) before and (b) after introducing multilayered model. }
\label{MDS System}
\end{figure}
\\
\indent
The sentence edge function $\beta_e(s_i,s_j)=sim_{normal}(s_i,s_j)=\frac{sim(s_i,s_j)}{\sum_{s_k \in S \wedge k \neq i}sim(s_i,s_k)}$ and document edge function $\gamma_e(d_i,d_j)=sim_{normal}(d_i,d_j)=\frac{sim(d_i,d_j)}{\sum_{d_k \in D \wedge k \neq i}sim(d_i,d_k)}$ are formulated as the normalized similarity between the two sentences $s_i$ and $s_j$, and the two documents $d_i$ and $d_j$, respectively. The SRL frame edge function $\alpha_e(f_i,f_j)=sim_{composite}(f_i,f_j)$, is formulated as the composite similarity function of two frames $f_i$ and $f_j$. Let N be the total number of frames in a documents set. The frame vertex function $\alpha_v(f_i)=$ assigns to frame vertices the value of $1/N$ or $1/2N$, depending on their completeness, where incomplete frames have lower weight. The sentence $\beta_v(s_i)=centr\_norm(s_i)=\frac{\sum_{u \in s_i}cw(u)}{\sum_{v \in S}cw(v)}$ and document vertex $\gamma_v(d_i)=centr\_norm(d_i)=\frac{\sum_{u \in d_i}cw(u)}{\sum_{v \in D}cw(v)}$ functions are defined by the normalized centroid-based weight of the sentence and document, respectively where $cw(u)$ denotes the centroid \cite{radev} weight of the word u.
\\
\indent Our goal is to capture the similarity and redundancy between sentences, but at a lower structural and a higher semantic level. To accomplish this, we use the event-indexing model as the base for calculations of semantic similarity between frames of semantic role parser outputs, namely frames. Let us define the similarity measure for protagonist $sim_{protagonist}(f_i,f_j)=\alpha_1 \cdot sim(A0_i,A0_j) + \alpha_2 \cdot sim(A1_i,A1_j) + \alpha_3 \cdot sim(A2_i,A2_j) + \alpha_4 \cdot sim(A0_i,A1_j) + \alpha_5 \cdot sim(A0_i,A2_j)+ \alpha_6 \cdot sim(A1_i,A2_j)$; temporality $sim_{temporality}(f_i,f_j)=sim(Am\_Tmp_i,Am\_Tmp_j)$; spatiality $sim_{spatiality}(f_i,f_j)=sim(Am\_Loc_i,Am\_Loc_j)$ and causality \\ $sim_{causality}(f_i,f_j)=sim(Predicate_i,Predicate_j)$.
In order to have the flexible weighting scheme we use coefficients $\alpha_1=\alpha_2=\alpha_3=0.25; \alpha_4=\alpha_5=0.10; \alpha_6=0.5$. The compose similarity is defined as:
\begin{align}
 sim_{composite}(f_i,f_j)=\Big(\beta_1sim_{protagonist}(f_i,f_j) + \beta_2sim_{temporality}(f_i,f_j) \nonumber \\
 + \beta_3sim_{spatiality}(f_i,f_j) + \beta_4sim_{causality}(f_i,f_j)\Big)/\#arguments \nonumber
\end{align}
where $\beta_1=0.4; \beta_2=0.3; \beta_3=0.2; \beta_4=0.1 .$ The values for coefficients are chosen based on the cognitive model which gives an emphasis in the decreasing order to indices.

\section{Multilayered graph-based ranking algorithm}
In previous section the idea of multilayered text similarity graph is presented, based on it in this section we present a modified iterative graph-based sentence ranking algorithm. \\
\indent Our algorithm is  extended from those existing PageRank-like algorithms reported in the literature that calculate the graph only in the sentence level \cite{erk01,rada1}, or sentence and document level \cite{erk02,furuWei,wan}. \\ \indent In the summary, PageRank method (in matrix notation)
as described in the original paper \cite{page} is

\begin{align}
\pi^{(k+1)T} = \alpha\pi^{(k)T}\textbf{H}+(\alpha\pi^{(k)T}a+1-\alpha)\textbf{v$^T$} \nonumber
\label{pagerank}
\end{align}
where \textbf{H} is a very sparse, raw sub stochastic hyperlink matrix,  $\alpha$ is a scaling parameter between 0 and 1, $\pi^T$ is
the stationary row vector of \textbf{H} called the PageRank vector, \textbf{v}$^T$ is a complete dense, rank-one teleportation matrix and a is a binary dangling node vector. In terms of the sentence ranking the matrix \textbf{H} is an adjacency matrix of similarities, \textbf{v$^T$} is the affinity vector and the resulting $\pi^T$ is the frame ranking vector.\\ \indent
For sake of simplicity, we just assume there are two documents (e.g.$D_1,D_2$) and 4 sentences
(e.g.$S_{1,1},S_{1,2}$ in $D_1$ and $S_{2,1},S_{2,2}$ in $D_2$) involved in ranking. Yet there are eight frames extracted from four sentences, let us show the document, sentence (just the first one)
and frames (again just the first one) similarity matrices:
\\
$H_0 = \begin{bmatrix} D_{1,1}&D_{1,2}\\ D_{2,1}&D_{2,2} \end{bmatrix}$ $D_{1,1} = \begin{bmatrix} S_{1,1}&S_{1,2}\\ S_{2,1}&S_{2,2} \end{bmatrix}$  $S_{1,1} = \begin{bmatrix} F_{1,1}&F_{1,2}\\ F_{2,1}&F_{2,2} \end{bmatrix}$ \\
$H_{D_{1,1}} = \begin{bmatrix} w_{S_{11}}S_{1,1}&w_{S_{12}}S_{1,2}\\ w_{S_{21}}S_{2,1}&w_{S_{22}}S_{2,2} \end{bmatrix}$
$H_{S_{1,1}} = \begin{bmatrix} w_{F_{11}}F_{1,1}&w_{F_{12}}F_{1,2}\\ w_{F_{21}}F_{2,1}&w_{F_{22}}F_{2,2} \end{bmatrix}$
\\
\\ \indent
The block matrix $D_{1,1}$ refers to the similarity matrix of the sentences in document $D_1$, while $D_{1,2}$ represents the fold-document ($D_1$ and $D_2$) similarity matrix, and so on and so forth. Similarly the block matrix $S_{1,1}$ in $D_{1,1}$ denotes the similarity  matrix of the frames in sentence $S_1$, while $S_{1,2}$ denotes the fold-sentence ($S_1$ and $S_2$) affinity matrix. $H_0$ corresponds to the original sentence similarity matrix used in generic graph methods. The effective way of integrating
the document and sentence dimension into the $H_0$ is to highlight the document and sentence influences on the sentence and
frame edges that connect different documents and sentences as illustrated in formulas for $H_{D_{1,1}}$ and $H_{S_{1,1}}$.
The weight matrix $W_d = \begin{bmatrix} w_{S_{11}}&w_{S_{12}}\\ w_{S_{21}}&w_{S_{22}} \end{bmatrix}$ is used to distinguish
the cross-document sentence edges and the weight matrix $W_s = \begin{bmatrix} w_{F_{11}}&w_{F_{12}}\\ w_{F_{21}}&w_{F_{22}} \end{bmatrix}$ is used to distinguish the cross-sentence frame edges. The diagonal elements in $W_d$ and $W_s$  are set to 1 to neutralize the influence of the intra-document sentence and intra-sentence frame edges. Just on the opposite, the non-diagonal elements are weighted by the connections between the two corresponding documents and sentences. We define $W_d$ as $W_d(i,j)=1+ \gamma_e(d(s_i),d(s_j))$, where $d(s_i)$ presents the document that contains the sentence $s_i$. We also define $W_s$ as $W_s(i,j)=1+ \beta_e(s(f_i),s(f_j))$, and $s(f_i)$ presents the sentence that contains the frame $f_i$.
\\ \indent Based on assumption that a frame from the document and the sentence with higher significance should be ranked higher we
reflect the influence of the document and sentence dimensions on the affinity vector $\vec{v}$. Consequently, the
centroid-based weight of the document and the sentence are taken as the weights on the affinity vector $\vec{v}$.
See the following preference vectors:\\
$ \vec{v}_{o}=[\vec{v}_{d_1} \vec{v}_{d_2}]^T$;    \qquad  $\vec{v}=\bigg([\vec{v}_{d_1} \vec{v}_{d_2}]\cdot\begin{bmatrix}
w_{d_1}& \\ & w_{d_2} \end{bmatrix}\bigg)^T$; \\
$\vec{v}_{d_1}=\bigg([\vec{v}_{s_1} \vec{v}_{s_2}]\cdot\begin{bmatrix} w_{s_1}& \\ & w_{s_2} \end{bmatrix}\bigg)^T$;
$\vec{v}_{d_2}=\bigg([\vec{v}_{s_3} \vec{v}_{s_4}]\cdot\begin{bmatrix} w_{s_3}& \\ & w_{s_4} \end{bmatrix}\bigg)^T$;
\\ \indent Here $\vec{v}_{o}$ represents the original preference vector, as used in LexRank, $\vec{v}_{s_1}$ to $\vec{v}_{s_4}$ denote
the sub-preference vector of the frames from sentences $s_1$ to $s_4$, respectively;
and $\vec{v}_{d_1},\vec{v}_{d_2}$ denote the sub-preference vector of the sentences
from documents $d_1$  and $d_2$, respectively.
The weight matrices $W_{v_d} = \begin{bmatrix} w_{d_1}& \\ & w_{d_2} \end{bmatrix}$,
$W_{v_{s1}} = \begin{bmatrix} w_{s_1}& \\ & w_{s_2} \end{bmatrix}$  and $W_{v_{s2}} = \begin{bmatrix} w_{s_1}& \\ & w_{s_2} \end{bmatrix}$ are specified to introduce the bias to sentences from different documents and the bias towards frames from different sentences, respectively.
The weighting functions are defined as $ W_{v_d}(i)=1+\gamma_v(d(s_i))$, $W_{v_s}(i)=1+ \beta_v(s(f_i))$.
To ensure the solution of proposed algorithm, we should first make $\vec{v}$ a affinity probability vector.
\begin{lemma}
\label{lema1}
$\vec{v}$ is a probability vector, if $W_{v_{s_1}}$, $W_{v_{s_2}}$ and $W_{v_d}$ are positive and the diagonal elements in them sum to 1;
\end{lemma}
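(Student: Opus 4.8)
The plan is to verify directly the two defining properties of a probability vector --- nonnegativity of every entry and unit total mass --- by exploiting the hierarchical (frame $\to$ sentence $\to$ document) block structure of $\vec{v}$ and pushing the normalization through one level at a time. Throughout I will use that each sub-preference vector $\vec{v}_{s_i}$ is itself a probability vector (it is the LexRank-style affinity vector of the frames inside sentence $s_i$, hence nonnegative with entries summing to $1$); this, together with the positivity assumption on $W_{v_{s_1}}$, $W_{v_{s_2}}$ and $W_{v_d}$, is all that is needed.

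First I would settle nonnegativity. By definition $\vec{v}_{d_1}$ stacks the blocks $w_{s_1}\vec{v}_{s_1}$ and $w_{s_2}\vec{v}_{s_2}$, and $\vec{v}_{d_2}$ stacks $w_{s_3}\vec{v}_{s_3}$ and $w_{s_4}\vec{v}_{s_4}$; since each $w_{s_i}>0$ and each $\vec{v}_{s_i}\ge 0$ componentwise, both $\vec{v}_{d_1}$ and $\vec{v}_{d_2}$ are nonnegative. Stacking once more, $\vec{v}$ consists of the blocks $w_{d_1}\vec{v}_{d_1}$ and $w_{d_2}\vec{v}_{d_2}$ with $w_{d_1},w_{d_2}>0$, so $\vec{v}\ge 0$.

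Next I would compute the total mass $\mathbf{1}^{T}\vec{v}$ from the bottom up. At the sentence level, $\mathbf{1}^{T}\vec{v}_{d_1}=w_{s_1}\,\mathbf{1}^{T}\vec{v}_{s_1}+w_{s_2}\,\mathbf{1}^{T}\vec{v}_{s_2}=w_{s_1}+w_{s_2}$, which equals $1$ because the diagonal entries of $W_{v_{s_1}}$ sum to $1$; the same argument with $W_{v_{s_2}}$ gives $\mathbf{1}^{T}\vec{v}_{d_2}=w_{s_3}+w_{s_4}=1$. At the document level, $\mathbf{1}^{T}\vec{v}=w_{d_1}\,\mathbf{1}^{T}\vec{v}_{d_1}+w_{d_2}\,\mathbf{1}^{T}\vec{v}_{d_2}=w_{d_1}+w_{d_2}=1$, using that the diagonal of $W_{v_d}$ sums to $1$. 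Combined with the nonnegativity just established, this shows $\vec{v}$ is a probability vector.

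There is no serious obstacle here; the only points requiring care are (i) reading the notation $[\vec{v}_{s_1}\ \vec{v}_{s_2}]\cdot W_{v_{s_1}}$ correctly as block-wise scaling of the stacked sub-vectors rather than an ordinary matrix product, so that the transposes and dimensions line up, and (ii) making explicit the implicit hypothesis that the $\vec{v}_{s_i}$ are already normalized --- if one only assumes they are nonnegative, the statement should be read as ``$\vec{v}/(\mathbf{1}^{T}\vec{v})$ is a probability vector,'' and the same telescoping of the row sums through the two weight levels still does the work. For the general case (arbitrarily many documents, sentences per document, and frames per sentence) the identical two-level summation applies verbatim, with each weight matrix a positive diagonal matrix of trace $1$.
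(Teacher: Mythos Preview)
Your argument is correct and is essentially the same as the paper's: both start from the fact that each frame-level sub-preference vector $\vec{v}_{s_i}$ is already a probability vector, then push the sum through the two weight levels to obtain $|\vec{v}|=w_{d_1}(w_{s_1}+w_{s_2})+w_{d_2}(w_{s_3}+w_{s_4})=1$. You add an explicit nonnegativity check and some useful remarks on the block-scaling notation that the paper leaves implicit, but the underlying proof is identical.
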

\begin{proof}
Since $\vec{v}_{s_1}$, $\vec{v}_{s_2}$, $\vec{v}_{s_3}$ and $\vec{v}_{s_4}$ are probability vectors of frames from
four sentences we have $|\vec{v}_{s_1}|=|\vec{v}_{s_2}|=|\vec{v}_{s_3}|=|\vec{v}_{s_4}|=1$. Then,
\begin{align*}
 |\vec{v}| &= w_{d_1}(w_{s_1}|\vec{v}_{s_1}|+w_{s_2}|\vec{v}_{s_2}|)+ w_{d_2}(w_{s_3}|\vec{v}_{s_3}| + w_{s_4}|\vec{v}_{s_4}|) \\
 & =  w_{d_1}(w_{s_1}+w_{s_2})+ w_{d_2}(w_{s_3}+w_{s_4})=1
\end{align*}
\begin{equation*}
\mbox{if } (w_{s_1}+w_{s_2} )=1 \mbox{ and } (w_{s_3}+w_{s_4})=1 \mbox{ hence } w_{d_1}+w_{d_2}=1
\end{equation*}
\end{proof}
Afterward, we should make the matrix H column stochastic and irreducible by forcing each of four block matrices of sentences and two
matrices of documents to be column stochastic simply by normalizing them by columns. To make H irreducible, we make the sixteen block matrices in H irreducible by adding additional links between any two frames, which is also adapted in PageRank.
\begin{lemma}
\label{lema2}
H is column stochastic and irreducible.
\end{lemma}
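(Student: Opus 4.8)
The plan is to prove the two halves of the statement independently, reusing the bottom-up bookkeeping that established Lemma~\ref{lema1}.

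For column stochasticity I would argue level by level, from frames upward. Once each of the sixteen $2\times 2$ frame blocks has been normalized by columns it is column stochastic; since the off-diagonal weights $W_s(i,j)=1+\beta_e(s(f_i),s(f_j))$ are strictly above $1$ while the diagonal ones equal $1$, a column of a sentence-level block is then a stack of two frame sub-columns whose masses are the corresponding $W_s$-weights, so dividing that column by their sum makes the sentence-level block matrices column stochastic in turn. Running the identical step once more one level up — the off-diagonal blocks now carrying $W_d(i,j)=1+\gamma_e(d(s_i),d(s_j))$, the diagonal ones weight $1$, and each column of the resulting matrix again divided by its total — makes every column of the full $8\times 8$ matrix $H$ sum to $1$. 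The one thing to verify is that every intermediate column total is strictly positive, so that each division is legitimate; this is immediate, since the raw similarities are nonnegative, the $W_s$- and $W_d$-weights are bounded below by $1$, and, by the irreducibility patch below, no block entry vanishes.

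For irreducibility I would prove the stronger fact that, after the patch, every off-diagonal entry of $H$ is strictly positive, which forces irreducibility. ``Adding additional links between any two frames'' inside each of the sixteen $2\times 2$ blocks is exactly the teleportation device of \cite{page}: it replaces each block by a strictly positive $2\times 2$ matrix, and since the sentence and document weights only multiply entries by numbers $\geq 1$ and the successive column normalizations only divide by positive totals, strict positivity propagates to every off-diagonal entry of $H$. A matrix whose off-diagonal entries are all positive has a complete, hence strongly connected, associated digraph and is therefore irreducible (indeed primitive). I would also note in passing that column stochasticity together with this primitivity is precisely what Perron--Frobenius asks for so that the modified iteration of the previous section converges to a unique stationary ranking vector, which is why the lemma is needed here.

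The work is essentially bookkeeping, and the one genuine pitfall is the order of the normalizations: normalizing the frame blocks, then the sentence-level blocks, then the document-level blocks composes up the hierarchy to global column stochasticity, whereas normalizing all the blocks ``in parallel'' would leave a column summing to the number of stacked blocks rather than to $1$. The cleanest way to make this airtight is to write a single closed form for a generic entry of $H$ — a raw similarity times its $W_s$ and $W_d$ factors, divided by the nested column totals — and then sum over the row index with the column index held fixed, watching the nested totals cancel telescopically; the positivity remarks above make every division in that expression well defined.
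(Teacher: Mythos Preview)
Your column-stochasticity argument parallels the paper's: both exploit the nested block structure so that a column sum of $H$ factors through weight sums at the sentence and document levels, each normalized to $1$. The paper writes this out as a single factored sum $\sum_i H_{ij}=w_{d_{1k}}(w_{s_{1k}}+w_{s_{2k}})+w_{d_{2k}}(w_{s_{3k}}+w_{s_{4k}})$ and then invokes the same weight conditions as in Lemma~\ref{lema1}, while you phrase it as successive level-by-level normalizations; the bookkeeping is the same, and your remark about the order of normalizations is a useful clarification the paper leaves implicit.

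Your irreducibility argument, however, takes a different and stronger route. You observe that the PageRank-style patch makes each of the sixteen $2\times 2$ blocks strictly positive, and since the subsequent multiplications by the $W_s$, $W_d$ weights (all $\geq 1$) and divisions by positive column totals preserve strict positivity, every off-diagonal entry of $H$ is positive --- hence $H$ is irreducible, indeed primitive. The paper instead argues graph-theoretically: it uses only that the four \emph{diagonal} block matrices are irreducible (so each of the corresponding frame subgraphs is strongly connected) and that the off-diagonal blocks supply bidirectional edges linking those subgraphs, whence the full graph on eight frames is strongly connected. Your approach is shorter and delivers primitivity for free, which is what you actually need for power-iteration convergence; the paper's approach is more parsimonious in its hypotheses, since it would still succeed even if some off-diagonal block entries vanished, requiring only bidirectional connectivity between the strongly connected components.
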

\begin{proof}
H is column stochastic since the weight matrix W is column stochastic. Let A, B, C, and D, donate any of the 4 column block in H, then
\begin{eqnarray*}
\sum_{i}H_{ij}=w_{d_{1k}}\bigg(w_{s_{1k}}\sum_{i}A_{ij}+w_{s_{2k}}\sum_{i}B_{ij}\bigg) \qquad  \qquad  \qquad\\
+w_{d_{2k}}\bigg(w_{s_{3k}}\sum_{i}C_{ij}+w_{s_{4k}}\sum_{i}D_{ij}\bigg)(k=1,\ldots,4) \\
\sum_{i}H_{ij}= w_{d_{1k}}(w_{s_{1k}}+w_{s_{2k}})+w_{d_{2k}}(w_{s_{3k}}+w_{s_{4k}}) \qquad  \qquad
\end{eqnarray*}
\begin{equation*}
\mbox{if } (w_{s_1}+w_{s_2} )=1 \mbox{ and } (w_{s_3}+w_{s_4})=1 \mbox{ hence } w_{d_1}+w_{d_2}=1
\end{equation*}
Since the four graphs corresponding to the four diagonal block matrices in H are strongly connected (i.e. they are irreducible)
and the edges connecting the four graphs are bidirectional, the graph corresponding to H is obviously strongly connected.
Thus H must be also irreducible.
\end{proof}

Notice that we must ensure $W_{v_{s_1}}>0, W_{v_{s_2}}>0$ and make the sum of the diagonal elements equal to 1 in order to ensure $\vec{v}$ to be a probability vector. And we must make H column stochastic by setting $W_d>0, W_s>0$ and both matrices column stochastic.
Finally, we obtain that H is stochastic, irreducible and primitive, hence we can compute the unique dominant
vector (with 1 as the eigenvalue) of H by using the power iteration method applied to H which converges to $\pi$.
The previous explanation is given as example with a two-document, a four sentences (two of them in every document),
and eight frames (two of frames in every single sentence).  However, we can come to the same conclusion when the number of the documents,
sentences and frames involved extends from given values to arbitrary number. \\ \indent We can finally summarize the new ranking algorithm with following two functions.
\begin{eqnarray*}
H(i,j)=\alpha_e(f_i,f_j)W_d(i,j)W_s(i,j); \qquad \\
\vec{v}=\alpha_v(f_i)(1+ \gamma_v(d(s_i)))(1+\beta_v(s(f_i))).
\end{eqnarray*}
So far the document and sentence dimensions have been integrated into the PageRank-like algorithms for frame ranking with a solid mathematical foundation.

\section{Evaluation}

The DUC\footnote[1]{Document Understanding Conference(http://duc.nist.gov)} 2004 data set from DUC was tested to analyze the efficiency of the proposed summarization method. The Task 2 at the DUC 2004 is to generate a short summary (665 bytes) of an input set of topic-related news articles.
\begin{table}
\caption{ROUGE-1 scores of the DUC 2004 and evaluation of our model }
\begin{center}
\begin{tabular}{l r}
\hline\hline
Systems & ROUGE-1 (95\% Confidence interval) \\ [0.5ex]
\hline
Avg. of human assessors & 0.403 [0.383,0.424]  \\
\\
Best machine (SYSID = 65) & 0.382 [0.369,0.395] \\
Median machine (SYSID = 138) & 0.343 [0.328,0.358] \\
Worst machine (SYSID = 111) & 0.242 [0.230,0.253] \\ \\
\textbf{Our model} & 0.379 [0.361,0.389] \\
LexRank & 0.369 [0.354,0.382] \\ \\
2 (NIST Baseline) (Rank: 25/35) & 0.324 [0.309,0.339] \\
Random baseline:& 0.315 [0.303,0.328] \\[1ex]
\hline
\end{tabular}
 \end{center}
\label{tab:results}
\end{table}
The total number of document groups is 50, with each group containing 10 articles on average. For each group, four NIST assessors were asked to create a brief summary. Machine-generated summaries are evaluated using ROUGE \cite{rouge} automatic n-gram matching which measures performance based on the number of co-occurrences between machine-generated and ideal summaries in different word units. The 1-gram ROUGE score (a.k.a.ROUGE-1) has been found to correlate very well with human judgements at a confidence level of 95\%, based on various statistical metrics. Even though in this version of method we did not consider sentence positions or other summary quality improvement techniques such as sentence reduction, its overall performance is promising, please see table~\ref{tab:results}. The use of multilayered model in summarization can make considerable improvements even though the results presented here do not report a significant difference.

\section{Conclusion and future work}
We have presented a multilayered graph model and a ranking algorithm, for generic MDS. The main contributions of our work is introducing the concept of 3-layered graph model. The results of applying this model on extractive summarization are quite promising. There is work still left to be done, however. We are now working on further improvements of the model, and it's adaptation to other summarization tasks, such as the query and update summarization.

\subsubsection*{Acknowledgments.} We would like to thank the anonymous reviewers for their constructive comments.

\bibliographystyle{plain}

\end{document}